\newcommand{\mX}{\mathcal X}
\newcommand{\mK}{\mathcal K}
\newcommand{\mG}{\mathcal G}
\newcommand{\RP}{\right\rangle}
\newcommand{\LP}{\left\langle}
\newcommand{\LCM}{\nabla}
\newcommand{\LCS}{\mathcal D}
\newcommand{\SFS}{{\rm II}}
\newcommand{\MCS}{{\rm H}}
\newcommand{\SFIN}{{\rm II}'}
\newcounter{example}[section]
\newcounter{remark}[section]
\newcounter{theorem}[section]
\newcounter{proposition}[section]
\newcounter{lemma}[section]
\newcounter{corollary}[section]
\newcounter{definition}[section]
\def\theremark{\arabic{section}.\arabic{remark}}
\def\thetheorem{\arabic{section}.\arabic{theorem}}
\def\thedefinition{\arabic{section}.\arabic{definition}}
\renewcommand*{\email}[1]{\footnote{Electronic address: \href{mailto:#1}{\nolinkurl{#1}} }}
\newenvironment{proof}{\noindent {\textit{Proof:}}
}{$\Box$\medskip}
\newenvironment{proposition}{\refstepcounter{theorem}\medskip\noindent{\bf
Proposition \thetheorem}:}{$\Box$\medskip}
\newenvironment{definition}{\refstepcounter{definition}\medskip\noindent{\bf
Definition \thedefinition}:}{$\Box$\medskip}
\begin{document}

\title{ 
Photon regions in stationary axisymmetric spacetimes and umbilic conditions}
\author{Kirill Kobialko\email{kobyalkokv@yandex.ru}}
\author{Dmitri Gal'tsov\email{galtsov@phys.msu.ru}}
\affiliation{Faculty of Physics, Moscow State University, 119899, Moscow, Russia}

\begin{abstract}
We present the fundamentals of the recently proposed geometric description \cite{Kobialko:2020vqf} of {\em photon regions} in terms of foliation into {\em fundamental photon  hypersurfaces}, which satisfies the umbilic condition for the subbundle of the tangent bundle defined by the generalized impact parameter. 

\end{abstract}

\maketitle

\setcounter{page}{2}

\setcounter{equation}{0}
\setcounter{subsection}{0}
\setcounter{section}{0}

\section{Introduction}\label{intro}
Formation of shadows and relativistic images of stationary black holes is closely related to {\em photon regions} \cite{Grenzebach:2014fha,Grenzebach:2015oea}, which are defined as  compact domains where photons can travel endlessly without escaping to infinity or disappearing at the event horizon.
Indeed, the boundary of the gravitational shadow corresponds to the set of light rays that inspiral asymptotically onto the part of the spherical surfaces in photon regions on which closed spherical photon orbits are located \cite{Wilkins:1972rs,Teo:2020sey, Dokuchaev:2019jqq}.

Spherical surfaces in the photon region are just as important for determining the shadow of a stationary black hole as the {\em photon surfaces} \cite{Claudel:2000yi,Gibbons} in the static case\footnote{For recent review of strong {\em gravitational lensing} and shadows see \cite{Perlick:2021aok,Cunha:2018acu,Dokuchaev:2019jqq}. }.
Recall that an important property of the photon surfaces is established by the theorem asserting that these are timelike {\em totally umbilic} hypersurfaces $S$ in spacetime. This means that their second fundamental form $\SFS$ \cite{Chen} is proportional to the induced metric:   
\begin{equation}
\SFS(X,Y)=\MCS \LP X,Y\RP, \quad \forall X,Y \in TS.
\label{01}
\end{equation} 
This property can serve as a constructive definition for analyzing photon surfaces instead of solving geodesic equations. It is especially useful in the cases when the geodesic equations are non-separable, and their analytic solution can not be found \cite{Cornish:1996de,Cunha:2016bjh,Semerak:2012dw, Shipley:2016omi,Cunha:2018gql,Cunha:2017eoe}.
 
However, in rotating spacetime such as Kerr, spherical surfaces in the photon  region do not fully satisfy the umbilic condition and have a boundary.  
Such surfaces usually  form a family, parameterized by the value of the azimuthal impact parameter $\rho=L/E$, where $L,E$ are the integrals of motion corresponding to the timelike and azimuthal Killing vector fields \cite{Galtsov:2019bty,Galtsov:2019fzq}. To describe these surfaces and the photon  region geometrically, we introduce the concept of {\em partially umbilic} submanifolds that weaken the condition (\ref{01}). Namely, it is possible to impose the condition (\ref{01}) not on {\em all} vectors from the tangent space $TS$, but only on some subset of $TS$ specified by the azimuthal impact parameter. 
In addition, we must specify the boundary conditions for the submanifolds so that the photon does not escape through them. Together, this leads to the definition of {\em fundamental photon  submanifolds} \cite{Kobialko:2020vqf,Kobialko:2021aqg} - as generalization of {\em fundamental photon orbits} \cite{Cunha:2017eoe}. The slices consisting of the  fundamental photon surfaces form generalized photon  regions. 
 
In this article, we give a concise presentation of the main concepts of the geometric approach to fundamental photon submanifolds and regions. Section \ref{SS1} describes the partition of the tangent space of the manifold into sectors specified by the azimuthal impact parameter of the geodesics $\rho = L/E$.
Then, in the section \ref{SS2}, we introduce the notion of partially umbilic submanifolds on the $\rho$-constrained sector of the tangent space and define the {\em fundamental photon submanifolds}. 
The Section \ref{SS5} contains the geometric definition of the photon region. 

\section{Geodesic classes} 
\label{SS1} 

Let $M$ be a $m$ dimensional Lorentzian manifold \cite{Chen} with scalar product $\LP \;\cdot\; ,\;\cdot\;\RP$, Levi-Civita connection $\LCM$, a tangent bundle $TM$ and supposed to possess two commuting Killing vector fields $\mK_\alpha$ ($\alpha=t,\varphi$) defining a stationary axisymmetric spacetime such that $\mathcal G=\det(\mathcal G_{\alpha\beta})<0$, where $\mathcal G_{\alpha\beta}=\LP\mK_\alpha,\mK_\beta\RP$. 

Let us define a Killing vector field $\hat{\rho} \in \{\mathcal{K}_\alpha\}$ \cite{Kobialko:2021aqg} with index numbering Killing vectors fields of the frame $\{\mathcal{K}_\alpha\}$
\begin{equation}\label{FPS1}
\hat{\rho}=\rho^\alpha\mK_\alpha, \quad  \rho^\alpha=(\rho,1),
\end{equation}     
which is determined by arbitrary constant parameter $\rho$.  In addition, we will introduce a vector field $\hat{\tau}$ in $\{\mathcal{K}_\alpha\}$ orthogonal to $\hat{\rho}$:
\begin{equation}
    \tau^\alpha =\mathcal{G}^{\alpha\lambda}\epsilon_{\lambda\beta}\rho^\beta, \qquad
    \LP\hat{\tau},\hat{\tau}\RP = - \LP\hat{\rho},\hat{\rho}\RP,\qquad
    \LP\hat{\tau},\hat{\rho}\RP = 0,
\end{equation}
where $\epsilon_{\lambda\beta}$ is the two-dimensional Levi-Civita tensor. 

Then there is frame $\{\hat{\tau},e_a\}\cong\hat{\rho}^{\perp}$, such that $\{e_a\}$ ($a=1,m-2$) is a frame in euclidean orthogonal complement $\{\mathcal{K}_\alpha\}^{\perp}$\footnote{Orthogonal complement ${}^{\perp}$ defined in \cite{Chen}. If $\LP \hat{\rho},\hat{\rho}  \RP=0$, then $\hat{\tau}$ and $\hat{\rho}$ are simply proportional and the orthogonal complement $\hat{\rho}^{\perp}$ will contain only one null vector $\hat{\rho}$ and spacelike $e_a$.}.

Let $\gamma$ be some geodesic on $M$, and  $\dot{\gamma}$ denotes the tangent vector field to $\gamma$. Then there is the following general relationship between geodesics $\gamma$ and orthogonal complement $\hat{\rho}^{\perp}$.

\begin{proposition}
At every point $p\in M$ there is a one-to-one correspondence\footnote{Accurate to the geodesics reparametrization.} between geodesics $\gamma$ (with nonzero energy $E\equiv-\LP\mathcal{K}_t,\dot{\gamma}\RP\neq0$) with impact parameter $\rho=-\LP\mathcal{K}_\varphi,\dot{\gamma}\RP/\LP\mathcal{K}_t,\dot{\gamma}\RP$ and tangent vector fields $\mX\in \hat{\rho}^{\perp}$ with $\LP\mathcal{K}_t,\mX\RP\neq0$. 
\label{P1}
\end{proposition}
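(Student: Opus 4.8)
The plan is to prove the bijection at a fixed point $p$ by constructing the two opposite assignments and checking that they invert one another, with the whole argument resting on two standard ingredients: existence and uniqueness of the geodesic through $p$ with prescribed initial velocity, and conservation of the Killing momenta along geodesics. I would establish the latter first: for a Killing field $\mK$ and a geodesic $\gamma$,
\begin{equation}
\frac{d}{ds}\LP \mK,\dot\gamma\RP = \LP \LCM_{\dot\gamma}\mK,\dot\gamma\RP + \LP \mK,\LCM_{\dot\gamma}\dot\gamma\RP = 0,
\end{equation}
where the last term vanishes by the geodesic equation $\LCM_{\dot\gamma}\dot\gamma=0$ and the first by the antisymmetry of $\LCM\mK$ following from the Killing equation. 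Consequently $E=-\LP\mK_t,\dot\gamma\RP$ and $\LP\mK_\varphi,\dot\gamma\RP$ are both constant along $\gamma$, so the ratio $\rho=-\LP\mK_\varphi,\dot\gamma\RP/\LP\mK_t,\dot\gamma\RP$ is a genuine constant of the geodesic whenever $E\neq0$; and since $E$ is conserved, $E\neq0$ at $p$ forces $\LP\mK_t,\dot\gamma\RP\neq0$ all along $\gamma$, so this ratio is everywhere well defined.

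For the forward map I would send a qualifying geodesic $\gamma$ to $\mX=\dot\gamma(p)$. Then $\LP\mK_t,\mX\RP=-E\neq0$, and rewriting the impact-parameter identity as $\rho\LP\mK_t,\mX\RP=-\LP\mK_\varphi,\mX\RP$ yields $\LP\hat\rho,\mX\RP=\rho\LP\mK_t,\mX\RP+\LP\mK_\varphi,\mX\RP=0$, so $\mX\in\hat\rho^{\perp}$ with $\LP\mK_t,\mX\RP\neq0$, as required. For the backward map I would send such a vector $\mX$ to the unique geodesic $\gamma$ with $\gamma(0)=p$ and $\dot\gamma(0)=\mX$. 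By the conservation law $E=-\LP\mK_t,\mX\RP\neq0$, so $\gamma$ lies in the admissible class, and its impact parameter equals $-\LP\mK_\varphi,\mX\RP/\LP\mK_t,\mX\RP$; the membership $\mX\in\hat\rho^{\perp}$ reads $\rho\LP\mK_t,\mX\RP+\LP\mK_\varphi,\mX\RP=0$, which upon division by $\LP\mK_t,\mX\RP\neq0$ returns exactly the value $\rho$. Both assignments are therefore well defined.

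Finally I would verify that the two maps are mutually inverse: starting from $\mX$, the geodesic it generates has velocity $\mX$ at $p$ and so maps back to $\mX$, while starting from $\gamma$, uniqueness of the geodesic with velocity $\dot\gamma(p)$ recovers $\gamma$. Each direction is a single algebraic line, so I do not expect a serious obstacle; the only point requiring care, and the reason for the ``up to reparametrization'' caveat, is that an affine change $s\mapsto as+b$ rescales $\dot\gamma(p)$ by the nonzero factor $a$ while leaving the image of $\gamma$, the condition $E\neq0$, and the ratio $\rho$ untouched, so the raw correspondence is between reparametrization classes of geodesics and punctured lines through the origin in $\hat\rho^{\perp}$, with a geodesic identified with the line it spans in the fiber. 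The genuinely substantive step is the conservation argument of the first paragraph, which is what upgrades the pointwise linear relation $\LP\hat\rho,\mX\RP=0$ into a statement about the impact parameter of the \emph{entire} geodesic; note also that $\hat\rho^{\perp}$ may be degenerate when $\LP\hat\rho,\hat\rho\RP=0$, but this never enters the argument, which uses only the linear condition $\LP\hat\rho,\mX\RP=0$.
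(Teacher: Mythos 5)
Your proposal is correct and follows essentially the same route as the paper: the algebraic equivalence $\LP\hat\rho,\mX\RP=0 \Leftrightarrow \rho=-\LP\mK_\varphi,\mX\RP/\LP\mK_t,\mX\RP$ in one direction, and local existence and uniqueness of the geodesic with prescribed initial velocity in the other. Your explicit proof that the Killing momenta are conserved along $\gamma$ (so that $\rho$ is a constant of the whole geodesic, not just a quantity evaluated at $p$) is a welcome extra step that the paper's proof leaves implicit, but it does not change the structure of the argument.
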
  

\begin{proof}

If geodesic $\gamma$ has $\rho$ as an impact  parameter, then $\rho=-\LP\mathcal{K}_\varphi,\dot{\gamma}\RP/\LP\mathcal{K}_t,\dot{\gamma}\RP$ and multiplying by $\LP\mathcal{K}_t,\dot{\gamma}\RP\neq0$ we get $0=\LP\rho\mathcal{K}_t+\mathcal{K}_\varphi,\dot{\gamma}\RP=\LP\hat{\rho},\dot{\gamma}\RP$ and therefore $\dot{\gamma}\in\hat{\rho}^{\perp}$. If $\LP\hat{\rho},\mX\RP=0$ and $\LP\mathcal{K}_t,\mX\RP\neq0$, then at any point $p\in M$ the any vector $\mX|_p$ is a tangent vector to some geodesic $\gamma$, which always exists and unique at least in some vicinity of $p\in M$ as solution of ODE with initial conditions $\gamma(0)=p$ and $\dot{\gamma}(0)=\mX|_p$. The geodesic $\gamma$ has $\rho$ as an impact  parameter insofar as $0=\LP\hat{\rho},\dot{\gamma}(0)\RP=\LP\rho\mathcal{K}_t+\mathcal{K}_\varphi,\dot{\gamma}(0)\RP$ and we get $\rho=-\LP\mathcal{K}_\varphi,\dot{\gamma}(0)\RP/\LP\mathcal{K}_t,\dot{\gamma}(0)\RP$.
\end{proof}

It is clear that in the general case the Killing vector field $\hat{\rho}$ can be timelike on the some part of the manifold $M$. In this case, its orthogonal complement $\hat{\rho}^\perp$ will not have the Lorentzian signature everywhere, and therefore not all manifolds will be available for null geodesics with a given impact parameter $\rho$. So, our goal is to find the suitable region $\mathcal C\subset M$ in the original manifold $M$ such that $\LP \hat{\rho},\hat{\rho}\RP|_{\mathcal C}\geq0$.

\begin{proposition}
If $\LCM_{\hat{\rho}}\hat{\rho}\neq0$ for all null $\hat{\rho}$, the smooth function $\LP \hat{\rho},\hat{\rho}\RP$ defines $m$ dimensional manifold with boundary 
\begin{equation}
\mathcal C\subset M: \LP \hat{\rho},\hat{\rho}\RP|_{\mathcal C}\geq0,
\end{equation} 
with interior $\mathcal O:\LP \hat{\rho},\hat{\rho}\RP|_{\mathcal O}>0$ and $m-1$ dimensional boundary $\partial \mathcal C:\LP \hat{\rho},\hat{\rho}\RP|_{\partial\mathcal C}=0$ with outward normal
\begin{equation}
\mathcal N=\LCM_{\hat{\rho}}\hat{\rho}, \quad \LP\mathcal N,\hat{\rho}\RP=0.
\label{a2}
\end{equation} 
\end{proposition}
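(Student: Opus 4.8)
The plan is to recognize that the whole statement follows from the regular value theorem (in its manifold-with-boundary form) applied to the smooth scalar $f \equiv \LP\hat{\rho},\hat{\rho}\RP$ on $M$. Since $\hat{\rho}=\rho^\alpha\mK_\alpha$ is a fixed linear combination of the Killing fields $\mK_\alpha$ with constant coefficients $\rho^\alpha$, it is itself a Killing vector field, and $f$ is a smooth function. With this notation $\mathcal C = f^{-1}([0,\infty))$, $\mathcal O = f^{-1}((0,\infty))$, and $\partial\mathcal C = f^{-1}(0)$, so the entire statement reduces to showing that $0$ is a regular value of $f$, i.e. that $\LCM f$ (the gradient) is nonzero at every point where $f=0$.

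The key computation is the gradient of $f$. Differentiating and using the Killing equation $\LCM_\mu\hat{\rho}_\nu+\LCM_\nu\hat{\rho}_\mu=0$, I would obtain
\begin{equation}
\LCM_\nu f = 2\,\hat{\rho}^\mu\LCM_\nu\hat{\rho}_\mu = -2\,\hat{\rho}^\mu\LCM_\mu\hat{\rho}_\nu = -2\,(\LCM_{\hat{\rho}}\hat{\rho})_\nu,
\end{equation}
so that $\LCM f = -2\,\LCM_{\hat{\rho}}\hat{\rho}$. The hypothesis says precisely that $\LCM_{\hat{\rho}}\hat{\rho}\neq0$ wherever $\hat{\rho}$ is null, i.e. wherever $f=0$; hence $\LCM f\neq0$ on $f^{-1}(0)$ and $0$ is a regular value.

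Having established regularity, I would invoke the standard preimage theorem: for a regular value, $\mathcal C=f^{-1}([0,\infty))$ is a smooth $m$-dimensional manifold with boundary $\partial\mathcal C=f^{-1}(0)$, an embedded $(m-1)$-dimensional hypersurface, with open interior $\mathcal O=f^{-1}((0,\infty))$. The normal direction to the level set $\{f=0\}$ is $\LCM f=-2\,\LCM_{\hat{\rho}}\hat{\rho}$; since the outward direction (into $\{f<0\}$) is that of $-\LCM f$, the outward normal is positively proportional to $\LCM_{\hat{\rho}}\hat{\rho}$, giving $\mathcal N=\LCM_{\hat{\rho}}\hat{\rho}$ up to normalization. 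Finally, the orthogonality $\LP\mathcal N,\hat{\rho}\RP=\hat{\rho}^\nu\hat{\rho}^\mu\LCM_\mu\hat{\rho}_\nu=0$ is immediate, because $\LCM_\mu\hat{\rho}_\nu$ is antisymmetric (again by the Killing equation) while $\hat{\rho}^\mu\hat{\rho}^\nu$ is symmetric.

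The argument is essentially routine once the gradient identity is in hand, so there is no deep obstacle. The only points requiring care are (i) using the manifold-with-boundary form of the regular value theorem rather than the level-set form, and (ii) fixing the sign and orientation of the outward normal so that it points into $\{f<0\}$ and therefore agrees with $\LCM_{\hat{\rho}}\hat{\rho}$ rather than its negative.
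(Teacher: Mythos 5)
Your proposal is correct and follows essentially the same route as the paper: the same gradient identity $\LCM\LP\hat{\rho},\hat{\rho}\RP=-2\LCM_{\hat{\rho}}\hat{\rho}$ derived from the Killing equation, the same regularity/level-set argument for the manifold-with-boundary structure, the same sign convention for the outward normal, and the same antisymmetry argument for $\LP\mathcal N,\hat{\rho}\RP=0$. The only difference is cosmetic (index notation and an explicit appeal to the regular value theorem, which the paper leaves implicit).
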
 

\begin{proof}
Let $\mX\in TM$ be an arbitrary vector field in $M$, then using Killing equation we get
\begin{align}
\LP\LCM \LP \hat{\rho},\hat{\rho}\RP,\mX\RP\equiv\LCM_{\mX}\LP \hat{\rho},\hat{\rho}\RP&=2\LP \LCM_X\hat{\rho},\hat{\rho}\RP=-2\LP \LCM_{\hat{\rho}}\hat{\rho},\mX\RP,\\ &  \Updownarrow \nonumber\\
 \LCM \LP \hat{\rho},\hat{\rho}\RP&=-2\LCM_{\hat{\rho}}\hat{\rho}.
\label{a3}
\end{align} 
Thus $\LCM \LP \hat{\rho},\hat{\rho}\RP|_{\partial \mathcal C}\neq0$ and the boundary $\partial \mathcal C$ is a hypersurface in $M$ with the normal field $\mathcal N$ proportional to $\LCM_{\hat{\rho}}\hat{\rho}$. Choosing an outward normal, i.e. directed so that the function $\LP\hat{\rho}, \hat{\rho} \RP$ decreases along the $\mathcal N$ we get first condition in (\ref{a2}). Applying Killing equation again we get second condition in (\ref{a2}) since
\begin{align}
\LP\mathcal N,\hat{\rho}\RP=\LP \LCM_{\hat{\rho}}\hat{\rho},\hat{\rho}\RP=-\LP \LCM_{\hat{\rho}}\hat{\rho},\hat{\rho}\RP=0.
\label{a4}
\end{align} 
\end{proof}
 
\begin{definition}
A connected manifold $\mathcal C$ will be called causal region. The manifold $\mathcal O$ will be called accessible region \cite{Kobialko:2020vqf}.
\end{definition}
 
From the point of view of geodesics, and, in particular, the fundamental photon orbits, the region $\mathcal C$ represents an accessible region for the null geodesics motion in some effective potential  \cite{Cunha:2017eoe,LukesGerakopoulos:2012pq}. Physical meaning of the causal region $\mathcal C$ is that any point can be theoretically observable for any observer in the same region. This causal region may contain spatial infinity (if any) and then will be observable for an asymptotic observer. In some cases, several causal regions may exist, while null geodesics with a given $\rho$ cannot connect one to another. The boundary $\partial \mathcal C$ of the causal region is defined as the branch of the solution of the equation $ \LP \hat{\rho}, \hat{\rho}\RP = 0 $ and is the set of turning points of null geodesics.

\section{Photon submanifold}
\label{SS2}
Let $M$ and $S$ be Lorentzian manifolds, of dimension $m$ and $n$ respectively, and $S\rightarrow M$ an isometric embedding  \cite{Chen} defining $S$ as a submanifold\footnote{A hypersurface if $n=m-1$} in $M$. We adopt here the following convention for the second fundamental form  $\SFS$ of the submanifold\cite{Chen}:      
\begin{align}
\LCM_{Y}X=\LCS_YX+\SFS(X,Y), \quad X,Y\in TS,
\label{b56} 
\end{align} 
where $\LCS_X Y\in TS$, $\SFS(X,Y)\in TS^{\bot}$ and $\LCM$ and $\LCS$ are the Levi-Civita connections on $M$ and $S$ respectively.

\begin{definition}
We will call a submanifold $S$ invariant, if the Killing vector fields  $\mK_\alpha$ and $[\mK_\alpha,\mK_\beta]$ in $M$ are tangent vector fields to $S$.
\label{D3}
\end{definition}

For invariant submanifolds the Killing vectors of $M$ will be also the Killing vectors on the submanifold $S$ \cite{Kobialko:2021aqg} and well defined restrictions $\hat{\rho}^{\perp}$ and $\mathcal C$ on $TS$ and $S$ respectively. We now define a weakened version of the standard umbilic condition (\ref{01}) requiring it to be satisfied only for some subbundle $V\subset TS$ in the tangent bundle $TS$.

\begin{definition}
A submanifold $S$ will be called totally $V$ umbilic if \cite{Kobialko:2020vqf} 
\begin{align} \label{eq_chapter4_no_1}
\SFS(X,Y) =\MCS|_V\LP X,Y\RP, \quad \forall X ,Y\in V.
\end{align} 
\end{definition}

In particular, every totally umbilic  submanifold is trivially totally $V$ umbilic for any $V$. We also note that in the general case $\MCS|_V$ appearing in this formula is only part of the mean curvature \cite{Chen} i.e the trace of $\SFS$ on the subbundle $V$. For invariant totally $V$ umbilic submanifolds, an important theorem on the behavior of null geodesics holds, generalizing the classical result \cite{Chen,Claudel:2000yi}.

\begin{proposition} \label{T1}
Any null geodesic $\gamma$ with impact parameter $\rho$ in an invariant Lorentzian submanifold $S\subset \mathcal O$ is a null geodesic in $M$ if and only if $S$ is totally $\hat{\rho}^{\perp}$ umbilic submanifold.
\end{proposition}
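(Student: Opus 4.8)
The plan is to reduce everything to the Gauss decomposition (\ref{b56}), which splits $\LCM_{\dot\gamma}\dot\gamma$ into a tangential part $\LCS_{\dot\gamma}\dot\gamma\in TS$ and a normal part $\SFS(\dot\gamma,\dot\gamma)\in TS^{\perp}$; a curve in $S$ is an $M$-geodesic precisely when both parts vanish. Since $S$ is invariant, $\hat\rho$ restricts to a Killing field on $S$, so the impact parameter is well defined along $S$-geodesics and the argument of Proposition \ref{P1} carries over verbatim on $S$: the null $S$-geodesics with impact parameter $\rho$ are exactly those whose tangent lies in $V:=\hat\rho^{\perp}\cap TS$. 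Because $S\subset\mathcal O$ we have $\LP\hat\rho,\hat\rho\RP>0$, so $\hat\rho$ is spacelike and $V$ is genuinely Lorentzian with a nonempty, spanning null cone. I would first dispatch the ``if'' direction: assuming total $\hat\rho^{\perp}$ umbilicity, for a null $S$-geodesic with impact parameter $\rho$ one has $\dot\gamma\in V$, $\LP\dot\gamma,\dot\gamma\RP=0$ and $\LCS_{\dot\gamma}\dot\gamma=0$, whence $\SFS(\dot\gamma,\dot\gamma)=\MCS|_V\LP\dot\gamma,\dot\gamma\RP=0$ and (\ref{b56}) gives $\LCM_{\dot\gamma}\dot\gamma=0$; the curve is null in $M$ because the embedding is isometric.

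For the converse I would run the same identity backward and then pass through a null-cone argument. Fix $p\in S$ and an arbitrary null vector $X\in V_p$; by Proposition \ref{P1} on $S$ there is an $S$-geodesic $\gamma$ with $\dot\gamma(0)=X$, and it keeps impact parameter $\rho$ (as $\LP\hat\rho,\dot\gamma\RP$ is conserved along $S$-geodesics since $\hat\rho$ is Killing) and stays null (since $\LP\dot\gamma,\dot\gamma\RP$ is conserved). By hypothesis $\gamma$ is an $M$-geodesic, so (\ref{b56}) forces $\SFS(X,X)=\LCM_{\dot\gamma}\dot\gamma-\LCS_{\dot\gamma}\dot\gamma=0$. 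Thus the normal-valued symmetric form $\SFS|_V$ annihilates every null vector of the Lorentzian space $V_p$.

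The decisive step is the linear-algebra fact that a symmetric bilinear form on a Lorentzian vector space of dimension $\geq 2$ which vanishes on the entire null cone must be a scalar multiple of the metric (proved by polarizing on sums $v\pm w$ of null vectors). I would apply this componentwise in a local orthonormal frame of the normal bundle and reassemble the coefficients into a single normal field $\MCS|_V$, obtaining $\SFS(X,Y)=\MCS|_V\LP X,Y\RP$ for all $X,Y\in V$, i.e. total $\hat\rho^{\perp}$ umbilicity. The main obstacle is guaranteeing that the admissible null geodesics exhaust \emph{every} null direction of $V_p$ at each point; this is exactly what $S\subset\mathcal O$ supplies---ensuring $V$ is nondegenerate Lorentzian with a spanning null cone rather than the degenerate $\LP\hat\rho,\hat\rho\RP=0$ case flagged in the footnote---together with the existence half of Proposition \ref{P1}, after which the scalar null-cone lemma and its componentwise promotion to the normal bundle close the argument.
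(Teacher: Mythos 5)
Your proposal is correct and follows essentially the same route as the paper: both directions rest on the Gauss decomposition (\ref{b56}), and your ``null-cone polarization'' lemma is exactly what the paper carries out explicitly by evaluating $\SFS$ on the null vectors $\hat{\tau}/||\hat{\tau}||\pm e_a$ and $\hat{\tau}/||\hat{\tau}||\pm(e_a\pm e_b)/\sqrt{2}$ in the orthonormal frame of $\hat{\rho}^{\perp}$ available on $\mathcal O$. Your additional remarks on why every null direction of $V_p$ is realized by an admissible geodesic make explicit a step the paper leaves implicit, but the substance is identical.
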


\begin{proof} 
Let $S$ be a totally $\hat{\rho}^{\perp}$ umbilic invariant Lorentzian submanifold and $\gamma$ be an arbitrary affinely parameterized null geodesic with impact parameter $\rho$ in $ S $ i.e. $\LCS_{\dot{\gamma}} \dot{\gamma} = 0$ and $\dot{\gamma} \in \hat{\rho}^{\perp}\subset TS$. Then by the Gauss decomposition (\ref{b56}) 
\begin{align}
\LCM_{\dot{\gamma}}\dot{\gamma}=\LCS_{\dot{\gamma}}\dot{\gamma}+\SFS(\dot{\gamma},\dot{\gamma})=\MCS|_{\hat{\rho}^{\perp}}\LP \dot{\gamma},\dot{\gamma}\RP=0,
\end{align} 
consequently $\gamma$ is a null geodesic in $M$. 

Conversely, let every null geodesic $\gamma$ with impact parameter $\rho$ in $S$ be a null geodesic in $M$, then from the Gauss decomposition (\ref{b56})
\begin{align}
\SFS(\dot{\gamma},\dot{\gamma})=0,
\label{a22}
\end{align} 
for any null $ \dot{\gamma} \in \hat{\rho}^{\perp}\subset TS$. Since we limited ourselves to the accessible region $\mathcal O$ we can choose an orthonormal Lorentzian frame $\{\hat{\tau}/||\hat{\tau}||,e_a\}\cong\hat{\rho}^{\perp}\subset TS$ ($a=1,n-2$). Then the equality (\ref{a22}) for null vectors $\dot{\gamma} = \hat{\tau}/||\hat{\tau}||\pm e{}_a$ in the new frame takes the form
 \begin{align}
\SFS(\hat{\tau}/||\hat{\tau}||,\hat{\tau}/||\hat{\tau}||)+\SFS(e_a,e_a)=0, \quad \SFS(\hat{\tau}/||\hat{\tau}||,e_a)=0.
\end{align} 
And for null vectors $\dot{\gamma}=\hat{\tau}/||\hat{\tau}||\pm (e{}_a\pm e{}_b)/\sqrt{2}$
\begin{align}
\SFS(e_a,e_b)=0.
\end{align} 
\end{proof}

Physical meaning of the Proposition \ref{T1} is that the null geodesics with a given impact parameter $\rho$ initially touching the spatial section of the invariant totally $\hat{\rho}^{\perp}$ umbilic submanifold remain on it for an arbitrarily long time, unless of course they leave it across the boundary. This is a well-known property of a photon sphere and its generalization - a photon surface \cite{Claudel:2000yi}. Thus, we obtain a generalization of the classical definition of the photon surfaces to the case of a class of geodesics with a fixed impact parameter.

It is useful to obtain an equation for the second fundamental form of the totally $\hat{\rho}^{\perp}$ umbilic submanifold in the original basis $\left\{\mK_\alpha, e_a\right\}$.  

\begin{proposition} 
For the invariant $\hat{\rho}^{\perp}$ umbilic submanifold, the second fundamental form $\SFS$ in the basis $\{\mK_\alpha,e_a\}$ has the form
\begin{align} \label{eq_chapter4_no_10}
\SFS=\begin{pmatrix}
-\frac{1}{2} \LCM^{\perp}\mG_{\alpha\beta} & -\sum^{m-n}_{A=1} \epsilon_A\LP\LCM_{\xi_A} \mathcal K_\alpha,e_b\RP\xi_A\\
-\sum^{m-n}_{A=1} \epsilon_A\LP\LCM_{\xi_A} \mathcal K_\beta,e_a\RP\xi_A & \MCS|_{\hat{\rho}^\perp} \LP e_a,e_b\RP
\end{pmatrix},
\end{align}
where $\MCS|_{\hat{\rho}^\perp} $ and $\mG_{\alpha \beta} $ satisfy the master equation \cite{Kobialko:2021aqg}
\begin{align} \label{eq_chapter4_no_11}
\rho^{\alpha}\mathcal M_{\alpha\beta}\rho^{\beta}=0, \quad \mathcal M_{\alpha\beta}=\frac{1}{2}\LCM^{\perp}\left(\mathcal G^{-1}\mathcal G_{\alpha\beta}\right)-\MCS|_{\hat{\rho}^\perp} \left(\mathcal G^{-1}\mathcal G_{\alpha\beta}\right),\\  \quad
\mathcal{G}^{\alpha\lambda}\epsilon_{\lambda\beta}\rho^\beta\SFS(\mathcal K_\alpha,e_a)=0, \label{eq_chapter4_no_11_a}
\end{align}
and the derivative along the unit normals $\xi_A\in TS^{\perp}$ ($A=1,m-n$) of the submanifold $S$ is defined as
\begin{align}
\LCM^{\perp}(\;\cdot\;)=\sum^{m-n}_{A=1} \epsilon_A\LCM_{\xi_A}(\;\cdot\;)\xi_A, \quad \epsilon_A\equiv \LP\xi_A,\xi_A\RP.
\end{align}
\end{proposition}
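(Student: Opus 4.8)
The plan is to read off the three independent blocks of $\SFS$ directly from the Gauss formula (\ref{b56}), which identifies $\SFS(X,Y)$ with the $TS^{\perp}$-component of $\LCM_Y X$, i.e. $\SFS(X,Y)=\sum_A\epsilon_A\LP\LCM_Y X,\xi_A\RP\xi_A$. Beyond this, the only structural inputs are the Killing equation $\LP\LCM_X\mK_\alpha,Y\RP+\LP\LCM_Y\mK_\alpha,X\RP=0$ and the symmetry $\SFS(X,Y)=\SFS(Y,X)$, which holds because $[X,Y]\in TS$ for $X,Y\in TS$ (and in fact $[\mK_t,\mK_\varphi]=0$). Evaluating $\SFS$ on the basis $\{\mK_\alpha,e_a\}$ and projecting onto the unit normals $\xi_A$ then produces the claimed matrix entry by entry.

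For the $\mK_\alpha\mK_\beta$ block I would differentiate $\mG_{\alpha\beta}=\LP\mK_\alpha,\mK_\beta\RP$ along a normal $\xi_A$ and apply the Killing equation twice to get $\LCM_{\xi_A}\mG_{\alpha\beta}=-\LP\LCM_{\mK_\beta}\mK_\alpha,\xi_A\RP-\LP\LCM_{\mK_\alpha}\mK_\beta,\xi_A\RP$. Since $\SFS(\mK_\alpha,\mK_\beta)=\sum_A\epsilon_A\LP\LCM_{\mK_\beta}\mK_\alpha,\xi_A\RP\xi_A$ and $\SFS$ is symmetric, the two normal components coincide and the expression collapses to $-\tfrac12\LCM^{\perp}\mG_{\alpha\beta}$. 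For the $\mK_\alpha e_b$ block one use of the Killing equation flips $\LP\LCM_{e_b}\mK_\alpha,\xi_A\RP=-\LP\LCM_{\xi_A}\mK_\alpha,e_b\RP$, giving the stated off-diagonal entry. The $e_a e_b$ block is immediate: since $\hat{\rho}\in\{\mK_\alpha\}$, each $e_a\in\{\mK_\alpha\}^{\perp}\subset\hat{\rho}^{\perp}$, so the totally $\hat{\rho}^{\perp}$ umbilic hypothesis (\ref{eq_chapter4_no_1}) applies verbatim and returns $\MCS|_{\hat{\rho}^{\perp}}\LP e_a,e_b\RP$.

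It remains to extract the conditions coming from imposing the umbilic hypothesis on the last direction of $\hat{\rho}^{\perp}$, namely $\hat{\tau}=\tau^\alpha\mK_\alpha$. The mixed relation $\SFS(\hat{\tau},e_a)=\MCS|_{\hat{\rho}^{\perp}}\LP\hat{\tau},e_a\RP=0$ (using $\LP\hat{\tau},e_a\RP=0$) expands as $\tau^\alpha\SFS(\mK_\alpha,e_a)=0$, which is precisely (\ref{eq_chapter4_no_11_a}). For the master equation I would impose $\SFS(\hat{\tau},\hat{\tau})=\MCS|_{\hat{\rho}^{\perp}}\LP\hat{\tau},\hat{\tau}\RP=-\MCS|_{\hat{\rho}^{\perp}}\LP\hat{\rho},\hat{\rho}\RP$ and convert all $\tau$-contractions into $\rho$-contractions. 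Lowering indices gives $\tau_\gamma=\epsilon_{\gamma\beta}\rho^\beta$, and the two-dimensional identity $\epsilon_{\alpha\mu}\epsilon_{\beta\nu}=-(\mG_{\alpha\beta}\mG_{\mu\nu}-\mG_{\alpha\nu}\mG_{\mu\beta})$ (the minus sign reflecting $\mathcal{G}<0$) yields, for any symmetric $T_{\alpha\beta}$, the reduction $\tau^\alpha\tau^\beta T_{\alpha\beta}=\rho^\alpha\rho^\beta T_{\alpha\beta}-\LP\hat{\rho},\hat{\rho}\RP\,\mathcal{G}^{\lambda\sigma}T_{\lambda\sigma}$ (checked against $T=\mathcal{G}$, which correctly reproduces $\LP\hat{\tau},\hat{\tau}\RP=-\LP\hat{\rho},\hat{\rho}\RP$).

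The hard part will be the final bookkeeping that recasts the resulting scalar relation into the covariant form $\rho^\alpha\mathcal M_{\alpha\beta}\rho^\beta=0$. Applying the reduction with $T_{\alpha\beta}=\LCM^{\perp}\mG_{\alpha\beta}$ leaves a trace term $\LP\hat{\rho},\hat{\rho}\RP\mathcal{G}^{\lambda\sigma}\LCM^{\perp}\mG_{\lambda\sigma}$; I expect to absorb exactly this term by expanding $\LCM^{\perp}(\mathcal{G}^{-1}\mG_{\alpha\beta})=\LCM^{\perp}(\mathcal{G}^{-1})\mG_{\alpha\beta}+\mathcal{G}^{-1}\LCM^{\perp}\mG_{\alpha\beta}$ and using Jacobi's formula $\LCM^{\perp}(\mathcal{G})=\mathcal{G}\,\mathcal{G}^{\lambda\sigma}\LCM^{\perp}\mG_{\lambda\sigma}$, so that $\LCM^{\perp}(\mathcal{G}^{-1})=-\mathcal{G}^{-1}\mathcal{G}^{\lambda\sigma}\LCM^{\perp}\mG_{\lambda\sigma}$. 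The overall nonvanishing factor $\mathcal{G}^{-1}$ then factors out, and the $\hat{\tau}\hat{\tau}$ condition becomes equivalent to $\rho^\alpha\mathcal M_{\alpha\beta}\rho^\beta=0$ with $\mathcal M_{\alpha\beta}$ as in (\ref{eq_chapter4_no_11}). Tracking the signs and the placement of the $\mathcal{G}^{-1}$ normalization through these manipulations is the only delicate point; everything else is a direct application of the Gauss formula and the Killing equation.
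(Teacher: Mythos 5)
Your proposal is correct and follows essentially the same route as the paper: the blocks of $\SFS$ are obtained from the Gauss formula together with the Killing equation and the involutivity/symmetry of $\SFS$, and the master equation and the mixed condition (\ref{eq_chapter4_no_11_a}) arise from imposing the umbilic hypothesis on $\hat{\tau}\otimes\hat{\tau}$ and $\hat{\tau}\otimes e_a$ respectively. The only cosmetic difference is that you convert the $\hat{\tau}$-contractions to $\hat{\rho}$-contractions via the explicit two-dimensional identity $\epsilon_{\alpha\mu}\epsilon_{\beta\nu}=-(\mG_{\alpha\beta}\mG_{\mu\nu}-\mG_{\alpha\nu}\mG_{\mu\beta})$, whereas the paper manipulates the $\mG^{\lambda\gamma}\epsilon_{\lambda\alpha}\LCM^{\perp}\epsilon_{\gamma\beta}$ term directly; both yield the same trace term that the $\LCM^{\perp}(\mathcal G^{-1}\mG_{\alpha\beta})$ combination absorbs.
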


\begin{proof}
From the Killing equation for $X\in \{e_a\}$ we find
\begin{align} \label{eq_chapter2_no_8}
\SFS(X,\mathcal K_\alpha)=\pi^{\perp}(\LCM_{X} \mathcal K_\alpha)=\sum^{m-n}_{A=1} \epsilon_A\LP\LCM_{X} \mathcal K_\alpha,\xi_A\RP\xi_A
=-\sum^{m-n}_{A=1} \epsilon_A\LP\LCM_{\xi_A} \mathcal K_\alpha,X\RP\xi_A.
\end{align} 
Using the Killing equation again we get
\begin{align} 
\LP\LCM_{\mathcal K_\alpha}\mathcal K_\beta,\xi_A\RP=-\LP\LCM_{\xi_A}\mathcal K_\beta,\mathcal K_\alpha\RP=-\LCM_{\xi_A}\LP\mathcal K_\alpha,\mathcal K_\beta\RP-\LP\xi_A,\LCM_{\mathcal K_\beta}\mathcal K_a\RP.
\end{align} 
Then, using the involutivity condition $[\mathcal K_\alpha, \mathcal K_\beta] \in TS$ we finally find
\begin{align} \label{eq_chapter2_no_10}
\SFS(\mathcal K_\alpha,\mathcal K_\beta)=\sum^{m-n}_{A=1}\epsilon_A\LP\LCM_{\mathcal K_\alpha}\mathcal K_\beta,\xi_A\RP\xi_A=-\frac{1}{2}\sum^{m-n}_{A=1}\epsilon_A\LCM_{\xi_A}\LP\mathcal K_\alpha,\mathcal K_\beta\RP\xi_A.
\end{align} 
and
\begin{align} \label{eq_chapter4_no_12}
    \SFS(\hat{\tau},\hat{\tau}) =
    -\frac{1}{2}\tau^\alpha \tau^\beta \LCM^{\perp}\mG_{\alpha\beta} =
    - \rho^\alpha \rho^\beta \left(
      \frac{1}{2} \LCM^{\perp} \mG_{\alpha\beta}
    + \mG^{\lambda\gamma} \epsilon_{\lambda\alpha}\LCM^{\perp}\epsilon_{\gamma\beta}
    \right)=\\
    \frac{1}{2} \rho^\alpha \rho^\beta \left(
    - \LCM^{\perp} \mG_{\alpha\beta}
    + \mG_{\alpha\beta} \LCM^{\perp} \ln \mathcal{G}
    \right).
\end{align}
Substituting this expression into the equation (\ref{eq_chapter4_no_1}), we get (\ref{eq_chapter4_no_11}).
\end{proof}

Alternatively, the master equation can be rewritten as expression for the mean curvature
\begin{align} 
\label{eq_chapter4_no_41}
\MCS|_{\hat{\rho}^\perp}=\frac{1}{2}\LCM^{\perp}\ln\left(\mathcal{G}^{-1}\LP\hat{\rho},\hat{\rho}\RP\right).
\end{align}
If the submanifold under consideration is totally umbilic we'll get $\mathcal{M}_{\alpha\beta}=0$.

The notion of an invariant $\hat{\rho}^{\perp}$ umbilic submanifold is however too general (as is the notion of an umbilic surface by itself \cite{Cao:2019vlu}) and is not yet defined at the boundary of the causal region $\partial \mathcal C$. Generally speaking, these submanifolds are geodesically not complete (in the sense that null geodesics can leave them across the boundary) or have a non-compact spatial section (geodesics can go into the asymptotic region). Moreover, for each $\rho$ there can be an infinite number of them, just as there are an infinite number of umbilic surfaces, but only one photon sphere in the static Schwarzschild \cite{Cederbaum:2019rbv} solution. Therefore,  it is necessary to introduce a more specific definition of fundamental photon submanifolds \cite{Kobialko:2020vqf,Kobialko:2021aqg}.
 
\begin{definition}
A fundamental photon submanifold $S\subset \mathcal C$ is an invariant Lorentzian submanifold with compact spatial section such that:

(a) The boundary $\partial S$ (if any) lie in $\partial \mathcal C$.

(b) The second fundamental form $\SFS$ has the form (\ref{eq_chapter4_no_10}) and satisfies the master equation/inequality
\begin{align} \label{eq_chapter2_no_9}
\mathcal M(\hat{\rho},\hat{\rho})=0, \quad \LP\hat{\rho},\hat{\rho}\RP\geq0.
\end{align} 
\label{D5}
\end{definition}

In the case $n=m-1$, the fundamental photon submanifold is a timelike fundamental photon hypersurface (FPH).
 
\begin{proposition}
Every null geodesic $\gamma$ with an impact parameter $\rho$ at least once touching an arbitrary fundamental photon submanifold $S$ lies in it completely i.e. $\gamma\subset S$.
\label{P3}	
\end{proposition}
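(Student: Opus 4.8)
The plan is to reduce the statement to Proposition \ref{T1} combined with uniqueness of the geodesic initial‑value problem in $M$, and then to control the behaviour at the boundary $\partial\mathcal C$ by means of condition (a) of Definition \ref{D5}. I interpret ``touching'' as tangency: let $\gamma$ be a null geodesic of $M$ with impact parameter $\rho$ that is tangent to $S$ at some $p=\gamma(s_0)$, so that $\dot\gamma(s_0)\in T_pS$. Since $\gamma$ carries impact parameter $\rho$, Proposition \ref{P1} gives $\dot\gamma(s_0)\in\hat\rho^{\perp}$, whence $\dot\gamma(s_0)\in\hat\rho^{\perp}\cap T_pS$, precisely the subbundle on which the umbilic identity is imposed.

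First I would introduce the auxiliary curve $\tilde\gamma$, defined as the unique null geodesic of the geometry induced on $S$ with the same initial data $\tilde\gamma(s_0)=p$, $\dot{\tilde\gamma}(s_0)=\dot\gamma(s_0)$. Because $S$ is invariant (Definition \ref{D3}), each $\mathcal{K}_\alpha$ restricts to a Killing field on $S$, so $E=-\LP\mathcal{K}_t,\dot{\tilde\gamma}\RP$ and $L=-\LP\mathcal{K}_\varphi,\dot{\tilde\gamma}\RP$ are conserved along $\tilde\gamma$; consequently its impact parameter stays equal to $\rho$ and $\LP\hat\rho,\dot{\tilde\gamma}\RP\equiv 0$, i.e. $\dot{\tilde\gamma}\in\hat\rho^{\perp}$ throughout. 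While $\tilde\gamma$ runs in the interior $\mathcal{O}$, condition (b) of Definition \ref{D5} says exactly that $S$ is totally $\hat\rho^{\perp}$ umbilic there, so Proposition \ref{T1} applies and $\tilde\gamma$ is a null geodesic of $M$ as well. Now $\gamma$ and $\tilde\gamma$ are two null geodesics of $M$ sharing the same point and tangent at $s_0$; by uniqueness of geodesics in $M$ they coincide on their common domain, so $\gamma\subset S$ locally.

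It remains to upgrade this local coincidence to the whole of $\gamma$, and this boundary analysis is the main obstacle. I would argue by continuation: the set of affine parameters for which $\gamma$ lies in $S$ is nonempty and, by the previous paragraph, relatively open at every point landing in $\mathcal{O}$. The only way continuation can fail is if $\gamma$ reaches $\partial S$. Here I use two facts. On the one hand, a nonzero null vector $\dot\gamma\in\hat\rho^{\perp}$ can exist only where $\hat\rho^{\perp}$ is not Riemannian, i.e. where $\LP\hat\rho,\hat\rho\RP\geq 0$; hence $\gamma$ is automatically confined to $\mathcal{C}$ and can meet $\partial\mathcal{C}=\{\LP\hat\rho,\hat\rho\RP=0\}$ only at a turning point of the effective potential $\LP\hat\rho,\hat\rho\RP$, whose gradient is $-2\mathcal{N}$ with $\mathcal{N}=\LCM_{\hat\rho}\hat\rho$ by (\ref{a3}). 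On the other hand, condition (a) places $\partial S\subset\partial\mathcal{C}$. Thus $\gamma$ cannot cross $\partial\mathcal{C}$, and in particular cannot leave $S$ through $\partial S$; together with compactness of the spatial section, which rules out escape to infinity, this extends the identity $\gamma=\tilde\gamma$ to the full affine range, giving $\gamma\subset S$.

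The delicate point throughout is exactly $\partial\mathcal{C}$, where $\hat\rho$ becomes null, $\hat\rho^{\perp}$ degenerates, and the Lorentzian‑frame computation underlying the ``only if'' step of Proposition \ref{T1} no longer applies directly. There one cannot invoke the umbilic identity itself but must rely on the turning‑point structure of $\partial\mathcal{C}$ encoded in the normal $\mathcal{N}$ of (\ref{a2}) and on a continuity argument through the turning point, which is why conditions (a) and $\LP\hat\rho,\hat\rho\RP\geq0$ of Definition \ref{D5} enter essentially rather than the interior umbilic condition alone.
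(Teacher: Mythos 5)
Your interior argument is sound and essentially reproduces the paper's step: the auxiliary geodesic $\tilde\gamma$ in $S$, conservation of $E$ and $L$ on the invariant submanifold, Proposition \ref{T1}, and uniqueness of the geodesic initial-value problem is a spelled-out version of what the paper compresses into one sentence. The gap is in the boundary analysis. From ``$\gamma$ cannot cross $\partial\mathcal C$'' you conclude ``in particular [it] cannot leave $S$ through $\partial S$'', but this is a non sequitur: leaving $S$ at a point of $\partial S$ does not require crossing $\partial\mathcal C$. At $p\in\partial S\subset\partial\mathcal C$ the vector $\hat\rho|_p$ is null, so $\dot\gamma(s_1)$ is forced to be proportional to $\hat\rho|_p$ and the geodesic turns around; but a priori it may turn back into $\mathcal O\setminus S$, i.e.\ re-enter the $m$-dimensional accessible region transversally to the $n$-dimensional submanifold $S$ while remaining entirely inside $\mathcal C$. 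Confinement to $\mathcal C$ together with compactness of the spatial section says nothing about this mode of escape, which is exactly the one the proposition must exclude.

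What rules it out --- and what your proof never uses --- is that the master equation (\ref{eq_chapter2_no_9}) evaluated on $\partial S$, where $\LP\hat\rho,\hat\rho\RP=0$, forces $\SFS(\hat\rho,\hat\rho)|_{\partial S}=0$. By the Gauss decomposition (\ref{b56}) this gives $\mathcal N|_{\partial S}=\LCM_{\hat\rho}\hat\rho|_{\partial S}=\LCS_{\hat\rho}\hat\rho|_{\partial S}$, i.e.\ the turning-point acceleration $\mathcal N$ of (\ref{a2}) is \emph{tangent} to $S$, so the outward normal of $\partial\mathcal C$ in $M$ coincides with the outward normal of $\partial S$ in $S$ --- this is precisely equation (\ref{eq_chapter4_no_7}) of the paper. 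Only with this tangency does the turning point send the geodesic back into $S$ rather than merely back into $\mathcal C$. Your closing remark that one ``cannot invoke the umbilic identity itself'' at the boundary points in the wrong direction: it is exactly the boundary value of the master/umbilic condition, $\SFS(\hat\rho,\hat\rho)=0$ on $\partial S$, that supplies the missing ingredient, and without it the continuation argument does not close.
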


\begin{proof}
Obviously condition (b) in Definition \ref{D5}, by virtue of Proposition \ref{T1}, prevents null geodesics from leaving the fundamental photon submanifold at all interior points $(S/\partial S)\cap \mathcal O$. 

Conditions (a-b) for boundary points $\partial S$ and interior points $S \cap \partial \mathcal C$ prevents the possibility of null geodesics to leave fundamental photon submanifolds through them. Indeed, $\partial \mathcal C$ is the set of turning points for null geodesics in $M$ that cannot move in the direction of the normal $\mathcal N|_{\partial S}$, while condition $\SFS(\hat{\rho}, \hat{\rho})|_{\partial S} = 0$ following from (\ref{eq_chapter2_no_9}) not only prevents geodesics from moving in the normal directions $\xi_A$ but also ensures that the normal $\mathcal N_{\partial S}$ to $\partial S$ in $S$ coincides with the normal $\mathcal N|_{\partial S}$ to the $\partial \mathcal C$ in $M$.
\begin{equation} \label{eq_chapter4_no_7}
\mathcal N|_{\partial S}=\LCM_{\hat{\rho}}\hat{\rho}|_{\partial S}=\left\{\LCS_{\hat{\rho}}\hat{\rho}+\SFS(\hat{\rho},\hat{\rho})\right\}|_{\partial S}=\LCS_{\hat{\rho}}\hat{\rho}|_{\partial S}=\mathcal N_{\partial S}.
\end{equation}   
\end{proof}

From this statement, it is clear that the so-defined fundamental photon submanifolds have trapping properties even at the boundary and contain

(a) non-periodic trapped photon  orbits,

(b) periodic fundamental photon orbits \cite{Cunha:2017eoe}.

In stationary and axisymmetric geometry, there exists a vector field 
\begin{equation} \label{eq_chapter4_no_32}
\hat\omega = \omega^\alpha \mathcal K_\alpha, \quad \omega^\alpha=(1,\omega),
\end{equation} 
orthogonal to the all spatial slices $\Sigma$ of manifold $M$\footnote{The norm $||\hat\omega||$ of vector field  $\hat\omega$ is called the lapse function.}. In such slices, one can define fundamental photon submanifolds in terms of {\em principal curvatures}.

\begin{proposition} \label{prop_chapter3_no_7}
The principal curvatures $\lambda$ of the spatial section $S'$ of the invariant totally $\hat{\rho}^{\perp}$ umbilic submanifold $S$ satisfy the master equation
\begin{align} \label{eq_chapter4_no_26}
\lambda_{a}-\lambda_{\varphi}=\LCM^{\perp}\ln(||\hat{\rho}||/||\hat\omega||),
\end{align} 
with 
\begin{align} \label{eq_chapter4_no_25}
\lambda_{\varphi}=-\frac{1}{2} \LCM^{\perp}\ln\mathcal G_{\varphi\varphi}, \quad \lambda_{a}=\MCS|_{\hat{\rho}^{\perp}}.
\end{align} 
\end{proposition}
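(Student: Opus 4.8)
The plan is to recognize the two principal curvatures as the diagonal blocks of the ambient second fundamental form $\SFS$ restricted to the spatial slice, and then to convert the mean-curvature form of the master equation into the stated relation. First I would pin down the geometry of the spatial section: because $S$ is invariant it contains the full Killing two-plane, and because $\hat\omega$ is the timelike normal of the slices $\Sigma$, the spatial section is $S'=S\cap\Sigma$ with $TS'=\mathrm{span}\{\mK_\varphi,e_a\}$, all of $\{e_a\}$ being tangent to $\Sigma$ since they are orthogonal to the whole Killing plane. The azimuthal Killing vector $\mK_\varphi$ is tangent to $\Sigma$, and orthogonality of $\hat\omega$ to the slices gives $\LP\hat\omega,\mK_\varphi\RP=0$, which fixes $\omega=-\mG_{t\varphi}/\mG_{\varphi\varphi}$ and yields the identity I will need at the end,
\[
\LP\hat\omega,\hat\omega\RP=\mG/\mG_{\varphi\varphi}.
\]
Moreover the unit normal $\xi$ of $S$ in $M$ is orthogonal to every $\mK_\alpha$, hence to $\hat\omega$, so $\xi\in T\Sigma$ and $\xi$ is simultaneously the unit normal of $S'$ inside $\Sigma$.

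Next I would relate the shape operator of $S'$ in $\Sigma$ to $\SFS$. Projecting the Gauss formula, for $X,Y\in TS'$ the second fundamental form of $S'\subset\Sigma$ is $k(X,Y)=\LP\LCM_XY,\xi\RP=\LP\SFS(X,Y),\xi\RP$, since the term distinguishing the slice connection from $\LCM$ is proportional to $\LP\hat\omega,\xi\RP=0$ and drops out. The principal curvatures are therefore eigenvalues of the $\xi$-components of the blocks in (\ref{eq_chapter4_no_10}). The $(e_a,e_b)$ block gives $k(e_a,e_b)=\MCS|_{\hat\rho^{\perp}}\LP e_a,e_b\RP$, i.e. $\lambda_a=\MCS|_{\hat\rho^{\perp}}$, while the Killing block (\ref{eq_chapter2_no_10}) gives $\SFS(\mK_\varphi,\mK_\varphi)=-\tfrac12\LCM^{\perp}\mG_{\varphi\varphi}$, whence $\lambda_\varphi=k(\mK_\varphi,\mK_\varphi)/\LP\mK_\varphi,\mK_\varphi\RP=-\tfrac12\LCM^{\perp}\ln\mG_{\varphi\varphi}$.

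It then remains to combine these with the mean-curvature form (\ref{eq_chapter4_no_41}), $\MCS|_{\hat\rho^{\perp}}=\tfrac12\LCM^{\perp}\ln(\mG^{-1}\LP\hat\rho,\hat\rho\RP)$. I would compute
\[
\lambda_a-\lambda_\varphi=\tfrac12\LCM^{\perp}\ln\left(\frac{\LP\hat\rho,\hat\rho\RP\,\mG_{\varphi\varphi}}{\mG}\right),
\]
and substitute the identity $\mG/\mG_{\varphi\varphi}=\LP\hat\omega,\hat\omega\RP$ to obtain $\lambda_a-\lambda_\varphi=\tfrac12\LCM^{\perp}\ln(\LP\hat\rho,\hat\rho\RP/\LP\hat\omega,\hat\omega\RP)=\LCM^{\perp}\ln(||\hat\rho||/||\hat\omega||)$, which is exactly (\ref{eq_chapter4_no_26}).

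The step I expect to be the main obstacle is justifying that $\mK_\varphi$ and $\{e_a\}$ are genuinely principal directions, i.e. that the mixed block $k(\mK_\varphi,e_a)=-\LP\LCM_\xi\mK_\varphi,e_a\RP$ vanishes. This is not guaranteed by the constraint (\ref{eq_chapter4_no_11_a}), which only annihilates the $\hat\tau$-contraction $\SFS(\hat\tau,e_a)$; I expect to invoke the circularity of the stationary axisymmetric metric—the absence of cross terms linking the Killing two-plane with the $e_a$ block—so that $\LCM_\xi\mK_\varphi$ acquires no $e_a$ component. I would also handle the Lorentzian norms with care: on $\mathcal O$ one has $\LP\hat\rho,\hat\rho\RP>0$ but $\LP\hat\omega,\hat\omega\RP<0$, so that $||\hat\omega||^2=|\LP\hat\omega,\hat\omega\RP|$, a sign that does not affect the logarithmic derivative $\LCM^{\perp}$ but must be tracked to keep the norms well defined.
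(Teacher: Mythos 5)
Your proposal is correct and follows essentially the same route as the paper: identify $\lambda_a$ and $\lambda_\varphi$ with the diagonal blocks of $\SFS$ restricted to $S'=S\cap\Sigma$, invoke the mean-curvature form (\ref{eq_chapter4_no_41}) of the master equation, and close with the identity $\mathcal G=\LP\hat\omega,\hat\omega\RP\mathcal G_{\varphi\varphi}$. The one point you flag as an obstacle --- the vanishing of the mixed block $\SFS(\mK_\varphi,e_a)$ --- is not derived in the paper either: it is simply \emph{assumed} there (with a reference to \cite{Kobialko:2020vqf}), so your instinct to justify it via circularity of the stationary axisymmetric metric is, if anything, more careful than the printed argument.
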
 

\begin{proof}
From the general theory for the second fundamental form $\SFIN$ of the spatial section $S'\subset \Sigma$ of the invariant totally $\hat{\rho}^{\perp}$ umbilic submanifold $S$ we find that
\begin{align}  \label{eq_chapter4_no_22}
&\SFIN(X,Y)=\SFS(X,Y)=\MCS|_{\hat{\rho}^{\perp}}\LP X,Y\RP, \quad \SFIN(\mathcal K_\varphi,\mathcal K_\varphi)=\SFS(\mathcal K_\varphi,\mathcal K_\varphi) =-\frac{1}{2} \LCM^{\perp}\mathcal G_{\varphi\varphi},\nonumber\\
&\SFIN(\mathcal K_\varphi,X)=\SFS(\mathcal K_\varphi,X)=-\sum^{m-n}_{A=1} \epsilon_A\LP\LCM_{\xi_A} \mathcal K_\varphi,X\RP\xi_A, \quad X\in \{e_a\}.\nonumber
\end{align}
In what follows, we will assume that $\SFS(\mathcal K_\varphi,X) = 0$\cite{Kobialko:2020vqf}. Due to the shape of the second fundamental form of the spatial slices $\Sigma$ and the expression for mixed components, we find $\SFS(\hat\omega,X) = 0$, and therefore from $\SFS(\mathcal K_\varphi,X) = 0 $ follows $\SFS(\mathcal K_t,X) = 0$ and the second umbilic condition (\ref{eq_chapter4_no_11_a}) is fulfilled identically. Further, from the master equation (\ref{eq_chapter4_no_41}) and  expressions for the determinant $\mathcal G$
\begin{align} 
\mathcal G=\LP\mathcal K_t,\mathcal K_t\RP \LP\mathcal K_\varphi,\mathcal K_\varphi\RP -\LP\mathcal K_t,\mathcal K_\varphi\RP^2\nonumber\\=(\LP\hat\omega,\hat\omega\RP-2 \omega \LP\mathcal K_t,\mathcal K_\varphi\RP-\omega^2\LP\mathcal K_\varphi,\mathcal K_\varphi\RP)\LP\mathcal K_\varphi,\mathcal K_\varphi\RP-\omega^2 \LP\mathcal K_\varphi,\mathcal K_\varphi\RP^2\nonumber\\=\LP\hat\omega,\hat\omega\RP\LP\mathcal K_\varphi,\mathcal K_\varphi\RP.
\end{align} 
we get (\ref{eq_chapter4_no_26}).
\end{proof}

\section{Photon Region}
\label{SS5}
We now define the concept of a fundamental photon region and a fundamental photon function -- a generalization of the classical three-dimensional photon region in the Kerr metric
\cite{Grenzebach:2014fha,Grenzebach:2015oea,Grover:2017mhm}.

\begin{definition}
The fundamental photon function $PF$ will be called the mapping \cite{Kobialko:2020vqf}
\begin{align}
PF:\rho \rightarrow \bigcup S
\end{align} 
which associates with each $\rho$ one or the union of several fundamental photon  submanifolds with the same $\rho$.
\end{definition}

The function $PF$ can be continuous, defining some connected smooth submanifold in the extended manifold $\left\{M,\rho\right\}$. At the same time, several continuous functions $PF$ can exist in which different FPHs correspond to one $\rho$. In particular, for a given $\rho$, photon  and antiphoton FPHs ((un)stable photon surface \cite{Koga:2019uqd}) can occur simultaneously, indicating the instability of the solution \cite{Gibbons}.  

\begin{definition}
The fundamental photon region is the complete image of the function $PF$
\begin{align}
PR=\bigcup_{\rho} PF.
\end{align} 
\end{definition}

A fundamental photon region is a standard region in the space $M$ in which there are fundamental photon orbits and, in particular, the classical photon region in the Kerr metric. However, the mapping $PF$ can several times cover the image of $PR$ or part of it when the parameter $\rho$ is continuously changed. For example, in the case of a static space, $PR$ is covered at least two times, i.e. $PF$ is a two-sheeted function.

The photon region can also be described using an algebraic equation \cite{Kobialko:2021aqg}. To do this, rewrite the equations and the inequality (\ref{eq_chapter2_no_9}) as
\begin{align} \label{eq_chapter4_no_16}
\mathcal M_{tt}\rho^2+2\mathcal M_{t\varphi}\rho+\mathcal M_{\varphi\varphi}=0,\\
\mathcal G_{tt}\rho^2+2\mathcal G_{t\varphi}\rho+\mathcal G_{\varphi\varphi}\geq0.
\end{align}
Then, excluding $\rho$, we find
\begin{align} \label{eq_chapter4_no_18}
2(-\mathcal M_{t\varphi}\pm\sqrt{\mathcal M_{t\varphi}^2-\mathcal M_{tt}\mathcal M_{\varphi\varphi}})(\mathcal G_{t\varphi}-\mathcal G_{tt}\mathcal M_{t\varphi}/\mathcal M_{tt})/\mathcal M_{tt}\nonumber\\+(\mathcal G_{\varphi\varphi}-\mathcal G_{tt}\mathcal M_{\varphi\varphi}/\mathcal M_{tt})\geq0,
\end{align}
or
\begin{align}\label{PR1b}
\pm 2(\mathcal G_{t\varphi}\mathcal M_{tt}-\mathcal G_{tt}\mathcal M_{t\varphi})\sqrt{-\mathcal M}-2\mathcal G_{tt} \cdot \mathcal M+\mathcal M_{tt} \cdot\mathcal G \cdot {\rm Tr}(\mathcal M) >0.
\end{align}
These inequality describe a generalized photon region.

\setcounter{equation}{0}

\section{Conclusion}
\label{Concl}
In this article, we briefly presented a purely geometric approach to defining characteristic surfaces and regions filled with closed photon  orbits, based on some generalization of umbilic hypersurfaces. The main new concept is a {\em partially} umbilic surface, which has umbilic properties with respect to a correctly defined subbundle of the tangent bundle. This approach does not address the integration of geodesic equations, and thus is applicable to spacestimes with a non-integrable geodesic structure.
 
We tried to give a more clear  and  concise idea of the main geometric notions presented in \cite{Kobialko:2020vqf}, supplementing them with a number of new useful expressions and relations, which, in particular, turned out to be useful for analyzing their connection with Killing tensor fields \cite{Kobialko:2021aqg}. We hope that this formalism will pave the way for obtaining new topological constraints, Penrose-type inequalities (and other estimates) \cite{Shiromizu:2017ego,Feng:2019zzn,Yang:2019zcn},  uniqueness theorems\cite{Cederbaum:2015fra,Yazadjiev:2015hda,Yazadjiev:2015mta,Yazadjiev:2015jza,Rogatko:2016mho,Cederbaum:2019rbv}, similar to ones known for photon spheres and transversally trapping surfaces \cite{Yoshino:2019dty,Yoshino:2019mqw}.

The work is supported by the Russian Foundation for Basic Research on the project 20-52-18012Bulg-a, and the Scientific and Educational School of Moscow State University “Fundamental and Applied Space Research”.

\end{document}